\tikzstyle{knoten}=[circle,draw=black,thin,fill=white,inner sep=0pt,minimum size=4.5mm]
\tikzstyle{knotenklein}=[circle,draw=black,thin,fill=white,inner sep=0pt,minimum size=2.5mm]
\DeclareMathOperator{\MCF}{MCF}
\newtheorem{theorem}{Theorem}
\newtheorem{lemma}{Lemma}
\newtheorem{corollary}{Corollary}
\newproof{proof}{Proof}
\newdefinition{definition}{Definition}
\renewcommand{\o}[1]{\overline{#1}}
\newcommand{\CMCFPC}{$\textsc{BCMCFP}_{\mathbb{R}}$\xspace}
\def\NAT@spacechar{~}
\def\ps@pprintTitle{%
 \let\@oddhead\@empty
 \let\@evenhead\@empty
 \def\@oddfoot{}%
 \let\@evenfoot\@oddfoot}
\begin{document}

\begin{frontmatter}

\title{On the Complexity and Approximability of Budget-Constrained Minimum Cost Flows\tnoteref{titleref}}

\tnotetext[titleref]{This work was partially supported by the German Federal Ministry of Education and Research within the project ``SinOptiKom -- Cross-sectoral Optimization of Transformation Processes in Municipal Infrastructures in Rural Areas''.}

\author[TUKL]{Michael Holzhauser\corref{cor1}}
\ead{holzhauser@mathematik.uni-kl.de}

\author[TUKL]{Sven O. Krumke}
\ead{krumke@mathematik.uni-kl.de}

\author[TUKL]{Clemens Thielen}
\ead{thielen@mathematik.uni-kl.de}

\cortext[cor1]{Corresponding author. Fax: +49 (631) 205-4737. Phone: +49 (631) 205-2511}

\address[TUKL]{University of Kaiserslautern, Department of Mathematics\\
  Paul-Ehrlich-Str.~14, D-67663~Kaiserslautern, Germany\\ \quad}

\begin{abstract}
	We investigate the complexity and approximability of the \emph{budget-constrained minimum cost flow problem}, which is an extension of the traditional minimum cost flow problem by a second kind of costs associated with each edge, whose total value in a feasible flow is constrained by a given budget~$B$. This problem can, e.g., be seen as the application of the $\varepsilon$-constraint method to the bicriteria minimum cost flow problem. We show that we can solve the problem exactly in weakly polynomial time~$\mathcal{O}(\log M \cdot \MCF(m,n,C,U))$, where $C$, $U$, and $M$ are upper bounds on the largest absolute cost, largest capacity, and largest absolute value of any number occuring in the input, respectively, and $\MCF(m,n,C,U)$ denotes the complexity of finding a traditional minimum cost flow. Moreover, we present two fully polynomial-time approximation schemes for the problem on general graphs and one with an improved running-time for the problem on acyclic graphs.
\end{abstract}

\begin{keyword}
	algorithms \sep complexity \sep minimum cost flow \sep approximation
\end{keyword}

\end{frontmatter}

\section{Introduction}
\label{sec:Intro}

In this paper, we investigate the natural extension of the traditional minimum cost flow problem (cf., e.g., \citep{Ahuja}) by a second kind of costs, called \emph{usage fees}, which are linear in the flow on the corresponding edge and bounded by a given budget~$B$. This extension allows us to solve many related problems such as the budget-constrained maximum dynamic flow problem (since each dynamic flow can be represented as a traditional minimum cost flow (cf. \citep{FordFulkersonDynamicFlow})) or the application of the $\varepsilon$-constraint method to the bicriteria minimum cost flow problem (cf., e.g., \citep{ChankongHaimes}).

To the best of our knowledge, the budget-constrained minimum cost flow problem was first mentioned in \citep{Ahuja}, where a structural result but no combinatorial algorithm was presented. A related problem in which a \emph{fixed} usage fee is induced by each edge with positive flow was investigated by \citet{SpieksmaAccessibility}. The model that we use here was recently investigated in \citet{BudgetConstrainedMinCostFlows}, where a strongly polynomial-time algorithm based on the interpretation of the problem as a bicriteria minimum cost flow problem was derived.

We extend these results and show that, using similar ideas, we can also obtain a weakly polynomial-time combinatorial algorithm that performs worse only by a logarithmic factor than the best algorithm for the traditional minimum cost flow problem. Moreover, we present two fully polynomial-time approximation schemes (FPTAS), one of which is based on techniques introduced by \citet{PapadimitrouApproximatePareto} and has a weakly polynomial running-time and one of which is based on the packing-LP framework developed by \citet{GargKoenemann}, achieving a strongly-polynomial running-time. The running-time of the latter FPTAS is subsequently improved for the case of acyclic graphs.

\section{Preliminaries}
\label{sec:Problem}

\subsection{Problem Definition}

In the budget-constrained minimum cost flow problem (abbreviated as \emph{\CMCFPC} in the following), we are given a directed multigraph~$G=(V,E)$ with edge capacities~$u_e \in \mathbb{N}_{\geq 0}$, costs~$c_e \in \mathbb{Z}$ (i.e., we allow integral costs with arbitrary sign), and usage fees~$b_e \in \mathbb{N}_{\geq 0}$ per unit of flow on the edges~$e \in E$, as well as a budget~$B \in \mathbb{N}_{\geq 0}$ and a distinguished source~$s \in V$ and sink~$t \in V$. The aim is to find a feasible $s$-$t$-flow~$x$ in $G$ that minimizes~$\sum_{e \in E} c_e \cdot x_e$ subject to the budget-constraint~$\sum_{e \in E} b_e \cdot x_e \leq B$. The problem~\CMCFPC can be stated as a linear program as follows:
{\allowdisplaybreaks
\begin{subequations}\label{eqn:CMCFP:LP}
\begin{align}
	\min		&	\sum_{e \in E} c_e \cdot x_e \label{eqn:CMCFP:LP_Objective} \\
	\text{s.t.}	&	\sum_{e \in \delta^-(v)} x_e - \sum_{e \in \delta^+(v)} x_e = 0 && \forall v \in V \setminus \{s,t\}, \\
                &	\sum_{e \in E} b_e \cdot x_e \leq B,	\label{eqn:CMCFP:LP_Budget} \\
				&	0 \leq x_e \leq u_e && \forall e \in E.
\end{align}
\end{subequations}
}
Here, we denote by $\delta^+(v)$ ($\delta^-(v)$) the set of \emph{outgoing} (\emph{incoming}) edges of some node~$v \in V$. We assume that there are no nodes~$v \in V \setminus \{s,t\}$ with $\delta^+(v) = \emptyset$ or $\delta^-(v) = \emptyset$ since no flow can reach such nodes due to flow conservation. Note that we can detect and remove such nodes along with their incident edges in linear time.

Furthermore, note that since the zero-flow is always feasible and has objective value zero, the optimal objective value of each instance of \CMCFPC is always non-positive. The problem is a generalization of the problem variant in which a desired flow value~$F$ is given since we can ``enforce'' such a flow value by adding an edge with negative costs of large absolute value and capacity~$F$ (cf. \citep{BudgetConstrainedMinCostFlows} for further details).

\subsection{Approximation Algorithms}

An algorithm~$A$ is called a (polynomial-time) \emph{approximation algorithm with performance guarantee $\alpha \in [1,\infty)$} or simply an \emph{$\alpha$--approximation} for \CMCFPC if, for each instance~$I$ of \CMCFPC with optimum solution~$x^*$, it computes a feasible solution~$x$ with objective value $c(x) \leq \frac{1}{\alpha} c(x^*)$ in polynomial time (note that $c(x)$ and $c(x^*)$ are non-positive, so $\frac{1}{\alpha} c(x^*) \geq c(x^*)$). An algorithm~$A$ that receives as input an instance~$I\in\Pi$ and a real number~$\varepsilon \in (0,1)$ is called a \emph{polynomial-time approximation scheme (PTAS)} if, on input~$(I,\varepsilon)$, it computes a feasible solution~$x$ with objective value $c(x) \leq (1 - \varepsilon) \cdot c(x^*)$ with a running-time that is polynomial in the encoding size~$|I|$ of $I$. If this running-time is additionally polynomial in $\frac{1}{\varepsilon}$, the algorithm is called a \emph{fully polynomial-time approximation scheme (FPTAS)}.

Moreover, for the case of \CMCFPC, we call an algorithm a \emph{bicriteria FPTAS} if, for each $\varepsilon \in (0,1)$, it computes a solution~$x$ with $c(x) \leq (1 - \varepsilon) \cdot c(x^*)$ and $b(x) \leq (1 + \varepsilon) \cdot b(x^*)$ in polynomial time.

\subsection{Parametric Search}

Throughout the paper, we make use of Megiddo's parametric search technique (cf. \citep{MegiddoCombinatorialOptimization}), which can be described as follows: Assume that we want to solve an optimization problem~$\Pi$ for which we already know an (exact) algorithm~$A$ that solves the problem, but in which some of the input values are now \emph{linear parametric values} that depend linearly on some real parameter~$\lambda$. Moreover, suppose that an algorithm~$C$ is known (in the following called \emph{callback}) that is able to decide if some candidate value for $\lambda$ is smaller, larger, or equal to the value~$\lambda^*$ that leads to an optimum solution to the underlying problem~$\Pi$. The idea of the parametric search technique is to simulate the execution of algorithm~$A$ with variables that still depend on the symbolic value~$\lambda$, and to continue the execution until we reach a comparison of two linear parametric values that needs to be resolved. Since both values depend linearly on $\lambda$, it either holds that one of the variables is always larger than or equal to the other one (in which case the result of the comparison is independent from $\lambda$) or that there is a unique intersection point~$\lambda'$. For this intersection point, we evaluate the callback~$C$ in order to determine if $\lambda' < \lambda^*$, $\lambda' > \lambda^*$, or $\lambda' = \lambda^*$ and, thus, resolve the comparison and continue the execution. Hence, as soon as the simulation of $A$ finishes, we have obtained an optimum solution to $\Pi$. The overall running-time is given by the running-time of $A$ times the running-time of $C$ and can be further improved using parallelization techniques described in \citep{MegiddoParallel}. We refer to \citep{MegiddoCombinatorialOptimization} for further details on the parametric search technique. Further applications and extenions of parametric search techniques can moreover be found in \citep{CohenFixedDimension,ToledoApproximateParametricSearching,ToledoFixedDimension}.

\section{Exact Algorithms}
\label{sec:Exact}

We start with results on the complexity of the problem \CMCFPC. The mathematical model \eqref{eqn:CMCFP:LP_Objective} -- \eqref{eqn:CMCFP:LP_Budget} for \CMCFPC, as introduced in Section~\ref{sec:Problem}, is a linear program, which can be solved in weakly polynomial time by known techniques such as interior point methods (cf. \citep{Schrijver}). In particular, using the procedure described by \citet{VaidyaInteriorPoint} to our multigraph setting, we get the following weakly polynomial running-time for \CMCFPC:

\begin{theorem}
	\CMCFPC is solvable in weakly polynomial time~$\mathcal{O}(m^{2.5} \cdot \log M)$. \qed
\end{theorem}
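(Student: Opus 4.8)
The plan is to treat \CMCFPC purely as the linear program~\eqref{eqn:CMCFP:LP} and to solve it with Vaidya's interior-point method, then to bound the two quantities that govern that method's running time: the number of iterations and the cost per iteration. First I would record the structure of the program. It has exactly $m$ variables $x_e$, and its constraint matrix $A$ consists of the node--arc incidence matrix restricted to the internal nodes $V \setminus \{s,t\}$ (the flow-conservation rows), stacked with the single budget row $(b_e)_{e \in E}$ from~\eqref{eqn:CMCFP:LP_Budget}, together with the box constraints $0 \le x_e \le u_e$. Since the zero flow is feasible and the objective is bounded below on the capacity box, the program attains an optimum, so Vaidya's algorithm is applicable and terminates.

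Next I would invoke the two standard ingredients of a path-following interior-point method. The number of iterations is $\mathcal{O}(\sqrt{m} \cdot L)$, where $L$ is the bit length of the instance; bounding every cost, capacity, usage fee, and the budget by $M$ gives $L = \mathcal{O}(\log M)$, so the iteration count is $\mathcal{O}(\sqrt{m} \cdot \log M)$. Each iteration is dominated by solving one normal-equations system with matrix $A D A^\top$ for a positive diagonal $D$. Here I would use Vaidya's technique for the sparse, network-structured incidence part and treat the single dense budget row as a rank-one modification, resolving it by a Sherman--Morrison update so that it does not destroy the favorable structure; this keeps the per-iteration work at $\mathcal{O}(m^2)$. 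Multiplying the two bounds yields the claimed $\mathcal{O}(m^{2.5} \cdot \log M)$.

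The hard part will be the two points where an interior-point method needs extra care. First, to justify $L = \mathcal{O}(\log M)$ I would bound the encoding length of the vertices of the feasible polyhedron: by Cramer's rule this reduces to bounding the subdeterminants of $A$, and since $A$ is the totally unimodular incidence matrix augmented by one integral row with entries at most $M$, a cofactor expansion along that row shows every such subdeterminant is at most $\mathrm{poly}(M)$. Second, the method returns only a near-optimal interior iterate, so I would append a purification step that rounds it to an exact optimal vertex, again relying on this polynomial bit bound. The remaining subtlety is the multigraph setting: parallel arcs only enlarge $m$ and leave the incidence-plus-budget structure intact, so Vaidya's analysis transfers essentially verbatim, and the overall bound $\mathcal{O}(m^{2.5} \cdot \log M)$ follows.
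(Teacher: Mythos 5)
Your proposal follows exactly the route the paper takes: the paper gives no proof beyond observing that \CMCFPC is the linear program~\eqref{eqn:CMCFP:LP} and citing Vaidya's interior-point method applied to this (multigraph) formulation. Your write-up simply supplies the standard details behind that citation (iteration count, per-iteration cost, subdeterminant/bit-length bound, and purification), so it is the same approach, elaborated.
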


However, in this paper, we are interested in \emph{combinatorial} algorithms that exploit the structure of the underlying problem. We show how we can incorporate combinatorial algorithms for the traditional minimum cost flow problem in order to solve the more general budget-constrained minimum cost flow problem \CMCFPC.

We can solve \CMCFPC by computing an efficient solution of the bicriteria minimum cost flow problem with the two objective functions~$c(x)$ and $b(x)$ that minimizes $c(x)$ while maintaining $b(x) \leq B$. Graphically, each optimum solution~$x^*$ of \CMCFPC corresponds to a point in the objective space that lies on the pareto frontier and not above the line~$b = B$.\footnotemark\xspace The situation is shown in Figure~\ref{fig:CMCFP:CMCFPC_pareto}.
\footnotetext{We refer to \citep{Ehrgott} for an in-depth treatment of bicriteria optimization problems and efficient solutions.}

\begin{figure}[ht]
	\centering
	\begin{tikzpicture}
		\begin{scope}
			\clip (0,0) -- (-2.5,1) -- (-4,2.5) -- (-5,5) -- (-5.2,6) -- (0,6) -| (0,0);
			\fill[gray!30] (0,0) rectangle (-5.2,6);
		\end{scope}
		\draw[thick] (0,0) -- (-2.5,1) -- (-4,2.5) -- (-5,5) -- (-5.2,6);

		\draw[dotted] (-5,3.5) -- (0.5,3.5);
		\draw (1,3.5) node {$b=B$};

		\draw (-4.4,3.5) circle (1mm) node[anchor=south west] {\scriptsize $(c(x^*),b(x^*))^T$};

		\draw[->] (-5,0) -- (1,0) node[anchor=north] {$c$};
		\draw[->] (0,0) node[anchor=north] {$0$} -- (0,6) node[anchor=west] {$b$};
	\end{tikzpicture}
	\caption{The objective space of the interpretation of \CMCFPC as a bicriteria minimum cost flow problem. The gray area corresponds to the set of the objective values of feasible flows, the thick black lines correspond to efficient edges, which form the pareto frontier.}
	\label{fig:CMCFP:CMCFPC_pareto}
\end{figure}
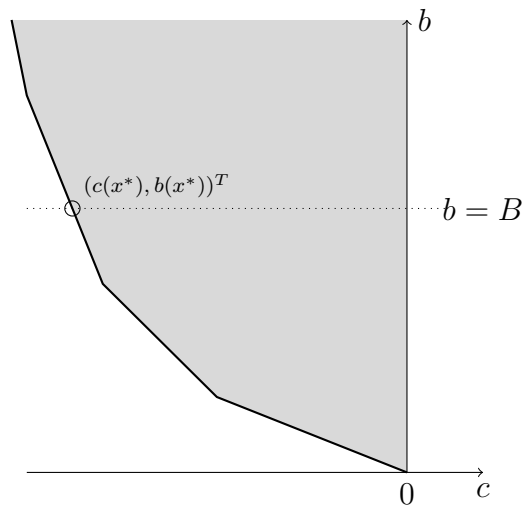

It is well-known that, for each point~$(c,b)^T$ on the pareto frontier, there is some value~$\lambda \in [0,\infty)$ and a feasible flow~$x$ with $(c(x),b(x))^T = (c,b)^T$ such that $x$ is a minimum cost flow with respect to the costs~$b_e + \lambda \cdot c_e$ for each edge~$e \in E$ (cf. \citet{Geoffrion}). Assume that there are two flows~$x^1$ and $x^2$ that are both optimal for some specific value of $\lambda$, i.e., $b(x^1) + \lambda \cdot c(x^1) = b(x^2) + \lambda \cdot c(x^2) = \alpha$ for some value~$\alpha$. Then, for both of the flows~$x^i$ with $i \in \{1,2\}$, it holds that $b(x^i) = \alpha - \lambda \cdot c(x^i)$, i.e., they lie on the same efficient edge, which is a straight line with slope $-\lambda$ in the objective space. In other words, computing a minimum cost flow with edge-costs~$b_e + \lambda \cdot c_e$ will either provide a solution that corresponds to on an extreme point of the pareto frontier or some point that lies on the efficient edge with slope~$-\lambda$. Moreover, as shown in \citep{BudgetConstrainedMinCostFlows}, the slopes of these efficient edges differ by minimum absolute amounts:

\begin{lemma}[\citep{BudgetConstrainedMinCostFlows}]\label{lem:CMCFP:Pareto_Slopes}
	The slopes of two efficient edges on the pareto frontier of any instance of \CMCFPC differ by an absolute value of at least $\frac{1}{\o{c}^2}$ for $\o{c} \colonequals \sum_{e \in E} |u_e \cdot c_e|$. \qed
\end{lemma}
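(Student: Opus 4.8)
The plan is to combine the integrality of the underlying flow polytope with a bound on the horizontal extent of the Pareto frontier, so that each slope appears as a ratio of two integers whose denominator is controlled by~$\o{c}$. First I would consider the polytope of feasible $s$-$t$-flows $P = \{x : \text{flow conservation holds}, \ 0 \le x_e \le u_e \ \forall e \in E\}$, whose constraint matrix is totally unimodular. Since all capacities~$u_e$ are integral and $P$ is bounded, every vertex of~$P$ is an integral flow. Each breakpoint (extreme point) of the Pareto frontier arises as the image of a vertex of~$P$ under the linear map $x \mapsto (c(x), b(x))$, because it uniquely optimizes some objective $b_e + \lambda \cdot c_e$ over~$P$. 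As $c_e \in \mathbb{Z}$, $b_e \in \mathbb{N}_{\geq 0}$, and the optimizing flow~$x$ is integral, both $c(x)$ and $b(x)$ are integers, so every extreme point of the Pareto frontier lies on the integer lattice.

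Next I would bound the range of the $c$-coordinate along the frontier. For any feasible flow, $c(x) = \sum_{e \in E} c_e \cdot x_e \geq \sum_{e : c_e < 0} c_e \cdot u_e \geq -\sum_{e \in E} |u_e \cdot c_e| = -\o{c}$, and every point with $c > 0$ is dominated by the zero-flow, which attains $(0,0)^T$. Hence the entire Pareto frontier lies in the strip $c \in [-\o{c}, 0]$, and for any efficient edge with integral endpoints $(c_i, b_i)^T$ and $(c_j, b_j)^T$ the integer difference $\Delta c = c_j - c_i$ satisfies $|\Delta c| \leq \o{c}$.

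Finally I would combine these observations. The slope of an efficient edge equals $\frac{b_j - b_i}{c_j - c_i} = \frac{\Delta b}{\Delta c}$, a ratio of two integers with $|\Delta c| \leq \o{c}$. Writing the slopes of two distinct efficient edges as $\frac{a_1}{d_1}$ and $\frac{a_2}{d_2}$, their difference equals $\frac{|a_1 d_2 - a_2 d_1|}{|d_1 d_2|}$, where the numerator is a nonzero integer (the edges have different slopes) and is therefore at least~$1$ in absolute value, while the denominator satisfies $|d_1 d_2| \leq \o{c}^2$. This yields the claimed bound of $\frac{1}{\o{c}^2}$.

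The main obstacle, and the step deserving the most care, is the localization of the frontier to the strip $c \in [-\o{c}, 0]$ rather than to the a~priori wider range $[-\o{c}, \o{c}]$: only the observation that all points with positive cost are dominated by the zero-flow makes $|\Delta c| \leq \o{c}$ (instead of~$2\o{c}$) available, which is exactly what is needed to reach $\frac{1}{\o{c}^2}$ rather than the weaker $\frac{1}{4\o{c}^2}$. The integrality argument itself is standard once the correspondence between breakpoints of the frontier and vertices of~$P$ is stated precisely.
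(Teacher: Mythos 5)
The paper itself does not prove this lemma; it is imported verbatim from the cited reference \citep{BudgetConstrainedMinCostFlows}, so there is no in-paper proof to compare against. Judged on its own terms, your argument is correct and is the standard way to prove such slope-separation bounds: the flow polytope is integral (totally unimodular constraint matrix, integral capacities), every breakpoint of the Pareto frontier is the image of a vertex of that polytope under the integral linear map $x \mapsto (c(x),b(x))$, hence each slope is a ratio of integers with denominator bounded by $\o{c}$ in absolute value, and two distinct such rationals differ by at least $\frac{1}{\o{c}^2}$. Your observation that the frontier is confined to $c \in [-\o{c},0]$ (because any point with positive cost is dominated by the zero flow) is exactly the refinement needed to get $\o{c}$ rather than $2\o{c}$ as the denominator bound. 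Two small points worth making explicit: first, an efficient edge cannot be vertical (two efficient points with equal $c$-value would dominate one another), so $\Delta c \neq 0$ and the slope is well defined; second, a vertex of the image polygon is attained by \emph{some} vertex of the flow polytope because its preimage is a nonempty face, which justifies the integrality of the breakpoints. With those remarks your proof is complete.
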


As explained above, each optimum solution $x^*$ of \CMCFPC is a minimum cost flow with respect to the edge-costs~$c_e + \lambda^* \cdot b_e$ for at least one value~$\lambda^* \in [0,+\infty)$. In particular, if $\Lambda^*$ denotes the set of all these values~$\lambda^*$, it holds that $\Lambda^*$ is a closed interval containing either one or infinitely many such values~$\lambda^*$ depending on whether the optimum solutions correspond to points that lie amid or at the corner of some efficient edge in the objective space, respectively. As claimed in the following lemma, which is proven in \citep{BudgetConstrainedMinCostFlows}, we are able to decide the membership in $\Lambda^*$ efficiently:

\begin{lemma}[\citep{BudgetConstrainedMinCostFlows}]\label{lem:CMCFP:Pareto_Callback}
	Let $\Lambda^* \neq \emptyset$ denote the set of parameters~$\lambda^*$ for which an optimum solution~$x^*$ to \CMCFPC is a minimum cost flow with respect to the edge-costs $c_e + \lambda^* \cdot b_e$ for each $e \in E$. For some candidate value~$\lambda$, it is possible to decide whether $\lambda < \min \Lambda^*$, $\lambda > \max \Lambda^*$, or $\lambda \in \Lambda^*$ in $\mathcal{O}(\MCF(m,n,C,U))$~time. \qed
\end{lemma}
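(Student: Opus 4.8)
The plan is to turn the decision into a comparison between the budget~$B$ and the range of budget-values $b(x)$ that are attained by minimum cost flows with respect to the parametric edge-costs $c_e + \lambda b_e$. First I would fix the candidate $\lambda \geq 0$ and let $\alpha \colonequals \min_x \bigl(c(x) + \lambda b(x)\bigr)$ be the optimal parametric cost. As recalled before the lemma, every flow attaining this minimum lies on the single line $c(x) + \lambda b(x) = \alpha$ in the objective space, so these flows correspond either to a whole efficient edge (precisely when $-\lambda$ is its slope) or to a single extreme point of the pareto frontier. Writing $\underline{b}(\lambda)$ and $\overline{b}(\lambda)$ for the smallest and largest value of $b(x)$ over all these minimum cost flows, the set of attained budget-values is exactly the interval $[\underline{b}(\lambda), \overline{b}(\lambda)]$.

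Next I would record the monotonicity that drives the search. For $\lambda = 0$ the parametric costs reduce to $c$, so the minimizer sits at the far end of the frontier (most negative $c$, hence largest $b$), whereas for $\lambda \to \infty$ the minimizer is the zero flow with $b = 0$; consequently the attained budget-values decrease as $\lambda$ grows. Since the optimum of \CMCFPC is the frontier point of smallest cost still satisfying $b \leq B$, and this point has $b(x^*) = B$ whenever the budget is binding, the three outcomes follow from comparing $B$ with the interval $[\underline{b}(\lambda), \overline{b}(\lambda)]$: if $B < \underline{b}(\lambda)$ then even the tightest minimum cost flow overspends the budget and $\lambda$ must be increased, so $\lambda < \min \Lambda^*$; if $B > \overline{b}(\lambda)$ then the budget is slack for this $\lambda$ and it must be decreased, so $\lambda > \max \Lambda^*$; and if $B \in [\underline{b}(\lambda), \overline{b}(\lambda)]$ then the efficient edge or extreme point selected by $\lambda$ contains a feasible optimum with $b = B$, so $\lambda \in \Lambda^*$.

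It then remains to compute $\underline{b}(\lambda)$ and $\overline{b}(\lambda)$ within the claimed time bound. Both are lexicographic minimum cost flow values: among all flows minimizing $c + \lambda b$, we additionally minimize $b$ (for $\underline{b}$) or maximize $b$ (for $\overline{b}$). Each such lexicographic optimum can be obtained from a single minimum cost flow computation using scaled edge-costs of the form $K \cdot (c_e + \lambda b_e) \pm b_e$, where the large multiplier $K$ forces the primary objective to dominate and the sign of the $b_e$-term selects the secondary direction; since on the optimal line $c = \alpha - \lambda b$ minimizing $b$ is equivalent to maximizing $c$, the tie-break is well-defined. The whole callback thus performs only a constant number of minimum cost flow computations, yielding the stated $\mathcal{O}(\MCF(m,n,C,U))$ running-time. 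The main obstacle is to verify that this lexicographic tie-breaking really stays within the bound, i.e.\ that $K$ (and the cleared denominator of a rational $\lambda$, whose size is controlled via Lemma~\ref{lem:CMCFP:Pareto_Slopes} by $\o{c}$) can be kept polynomially bounded so that the scaled costs do not enlarge the relevant cost magnitude beyond what $\MCF(m,n,C,U)$ already accounts for. Alongside this, I would check the degenerate cases (a non-binding budget, $\lambda = 0$, or a frontier that never reaches $b = B$), which all fit into the same case distinction.
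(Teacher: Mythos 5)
The paper itself does not prove this lemma --- it imports it from \citep{BudgetConstrainedMinCostFlows} --- so your proposal can only be judged on its own merits. Its overall shape is the natural one: reduce the callback to a constant number of minimum cost flow computations with the parametric costs $c_e+\lambda b_e$, determine the range $[\underline{b}(\lambda),\overline{b}(\lambda)]$ of budget values attained on the optimal face, and compare it with $B$. Two of your three cases are sound: if $B<\underline{b}(\lambda)$ then indeed $\lambda<\min\Lambda^*$ (every $\lambda$-minimizer overspends, and monotonicity of $\underline{b}$ forces every $\lambda^*\in\Lambda^*$ to be larger), and if $\underline{b}(\lambda)\le B\le\overline{b}(\lambda)$ the minimizer with $b=B$ is feasible and, by the supporting-line inequality $c(x)+\lambda b(x)\le c(x')+\lambda b(x')$ together with $b(x')\le B$, also $c$-minimal among feasible flows, so $\lambda\in\Lambda^*$.

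The genuine gap is the case $B>\overline{b}(\lambda)$, which you resolve as ``$\lambda>\max\Lambda^*$''. This is false whenever the budget constraint is not binding. Concretely, suppose the pareto frontier consists of a single efficient edge from $(0,0)$ to $(c_0,b_0)$ with $b_0<B$, supported by $\lambda_1=-c_0/b_0$. The constrained optimum is the corner $(c_0,b_0)$ and $\Lambda^*=[0,\lambda_1]$, yet for every $\lambda\in(0,\lambda_1)$ the unique $\lambda$-minimizer is that corner, so $\overline{b}(\lambda)=b_0<B$ and your rule outputs ``$\lambda>\max\Lambda^*$'' --- the wrong answer, and one that would drive the binary search of Theorem~\ref{thm:CMCFP:CMCFPC_WeaklyP} down to $\lambda=0$ where it still reports ``decrease $\lambda$'' even though $0\in\Lambda^*$. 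You flag ``a non-binding budget'' as a degenerate case that ``fits into the same case distinction'', but it does not: comparing $B$ with $[\underline{b}(\lambda),\overline{b}(\lambda)]$ alone cannot distinguish ``the frontier continues to feasible points of smaller cost'' from ``the selected corner is already the $c$-minimal feasible point''. A standard repair is to first test (with one lexicographic MCF computation minimizing $c$ and then $b$) whether some unconstrained $c$-minimizer satisfies $b\le B$; if so the instance is solved outright, and otherwise the budget is binding and your trichotomy becomes correct. Separately, your own worry about the lexicographic scaling $K\cdot(c_e+\lambda b_e)\pm b_e$ is legitimate but minor: it is cleaner to avoid the multiplier $K$ altogether by solving the secondary objective on the zero-reduced-cost subnetwork of the residual graph, which keeps the cost magnitudes at the level that $\MCF(m,n,C,U)$ already accounts for.
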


Let $\MCF(m,n)$ denote the complexity of computing a traditional minimum-cost flow in \emph{strongly} polynomial time. As shown in \citep{BudgetConstrainedMinCostFlows}, Lemma~\ref{lem:CMCFP:Pareto_Callback} can be used within Megiddo's parametric search technique in order to obtain strongly polynomial-time algorithms for \CMCFPC:

\begin{theorem}[\citep{BudgetConstrainedMinCostFlows}]
	\CMCFPC is solvable in strongly polynomial time $\mathcal{O}(m \log m \cdot \min\{T_1(m,n),T_2(m,n),T_3(m,n)\})$ with
	\begin{itemize}
		\item $T_1(m,n) \in \mathcal{O}((m + n \log n) \cdot \MCF(m,n))$,
		\item $T_2(m,n) \in \mathcal{O}((n \cdot \log\frac{m}{n} + n \cdot \log n + \log m) \cdot \MCF(m,n) + m)$, and
		\item $T_3(m,n) \in \mathcal{O}(\log \log m \cdot \log^2 m \cdot \MCF(m,n) + f(m))$ for $f(m) \in o(m^3)$. \qed
	\end{itemize}
\end{theorem}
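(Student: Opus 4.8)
The plan is to reduce the problem to pinning down a single parameter value and then to locate that value by Megiddo's parametric search. By the characterization preceding Lemma~\ref{lem:CMCFP:Pareto_Slopes}, every optimum solution~$x^*$ of \CMCFPC is a minimum cost flow with respect to the affine edge-costs~$c_e + \lambda^* \cdot b_e$ for some $\lambda^* \in \Lambda^*$, where $\Lambda^* \neq \emptyset$ is a closed interval whose membership problem is decided by the callback of Lemma~\ref{lem:CMCFP:Pareto_Callback} in $\mathcal{O}(\MCF(m,n))$~time. Hence, once some $\lambda^* \in \Lambda^*$ is known, it remains to compute a minimum cost flow for the costs~$c_e + \lambda^* \cdot b_e$ and, if the resulting flow lies in the interior of an efficient edge, to augment along that edge until the budget~$B$ is met exactly; both steps amount to a constant number of ordinary minimum cost flow computations.

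To determine such a $\lambda^*$ in \emph{strongly} polynomial time — that is, with a number of operations independent of the magnitude~$M$ — I would run Megiddo's parametric search (Section~\ref{sec:Problem}) on a minimum cost flow algorithm~$A$ executed symbolically on the parametric costs~$c_e + \lambda \cdot b_e$. Since these costs are affine in~$\lambda$, the cost-comparisons that govern the control flow of~$A$ are comparisons of affine functions of~$\lambda$, so each such comparison reduces to locating the target interval~$\Lambda^*$ relative to the unique intersection point~$\lambda'$ of two such functions. A single invocation of the callback~$C$ of Lemma~\ref{lem:CMCFP:Pareto_Callback} — reporting whether $\lambda' < \min\Lambda^*$, $\lambda' > \max\Lambda^*$, or $\lambda' \in \Lambda^*$ — resolves the comparison consistently for all of~$\Lambda^*$ (and in the last case exhibits a usable $\lambda^* = \lambda'$ outright). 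When the symbolic execution of~$A$ terminates, the branch taken is exactly the one belonging to~$\lambda^*$, and we obtain a witnessing minimum cost flow.

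The running-time bounds then reduce to counting callback invocations. The naive estimate multiplies the number of comparisons of~$A$ by~$\MCF(m,n)$; to reach the three stated bounds I would instantiate Megiddo's \emph{parallel} variant (cf.~\citep{MegiddoParallel}) with three different (parallel) minimum cost flow algorithms. In each parallel step carrying $P$ independent comparisons, one computes the $P$ critical values, sorts them, and binary-searches for~$\lambda^*$ among them with only $\mathcal{O}(\log P)$ callbacks, thereby resolving all $P$ comparisons simultaneously; the factor~$m\log m$ absorbs the sorting overhead accumulated over all steps. The expressions~$T_1$, $T_2$, and $T_3$ are then the respective numbers of callback-rounds times~$\MCF(m,n)$, reflecting three trade-offs between parallel depth and total work, with the $o(m^3)$ term in~$T_3$ accounting for the work of a comparison scheme of poly-logarithmic depth.

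I expect the main obstacle to be precisely this running-time accounting: for each candidate algorithm one must organize its comparisons into few parallel batches, bound the cost of sorting the critical values, and verify that the resulting number of callback-rounds is $m + n\log n$ (for $T_1$), $n\log\tfrac{m}{n} + n\log n + \log m$ (for $T_2$), and $\log\log m \cdot \log^2 m$ (for $T_3$) — the last of which calls for the refined parametric-search machinery of~\citep{CohenFixedDimension,ToledoApproximateParametricSearching,ToledoFixedDimension}. Correctness, by contrast, should be comparatively direct, resting only on the affine dependence of the compared quantities on~$\lambda$ and on the correctness of the callback, which together guarantee that the symbolically executed algorithm faithfully follows the branch associated with~$\lambda^*$.
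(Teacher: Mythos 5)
Your proposal matches the approach the paper attributes to \citep{BudgetConstrainedMinCostFlows}: the theorem is cited rather than reproved here, but the paper explicitly describes it as obtained by running Megiddo's parametric search on a (parallel) minimum cost flow algorithm with the symbolic costs~$c_e + \lambda \cdot b_e$, using the decision procedure of Lemma~\ref{lem:CMCFP:Pareto_Callback} as the callback and deriving the three bounds~$T_1$, $T_2$, $T_3$ from different parallelization trade-offs. Your reconstruction, including the batched resolution of independent comparisons via median/binary search and the final convex-combination step on the efficient edge crossing $b = B$, is consistent with this and with how the same machinery is deployed elsewhere in the paper (e.g., in the proofs of Theorem~\ref{thm:CMCFP:CMCFPC_WeaklyP} and Lemma~\ref{lem:MostViolatedConstraintAcyclic}).
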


We now show how we can use Lemma~\ref{lem:CMCFP:Pareto_Callback} within a binary search in order to obtain a weakly polynomial-time algorithm that performs within a factor~$\mathcal{O}(\log M)$ of each algorithm for the traditional minimum cost flow problem:

\begin{theorem}\label{thm:CMCFP:CMCFPC_WeaklyP}
	\CMCFPC is solvable in weakly polynomial time $\mathcal{O}(\log M \cdot \MCF(m,n,C,U))$.
\end{theorem}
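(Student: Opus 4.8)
The plan is to use the decision procedure of Lemma~\ref{lem:CMCFP:Pareto_Callback} as the comparison oracle of a binary search for a parameter~$\lambda \in \Lambda^*$. For a candidate~$\lambda$, the callback reports exactly whether $\lambda < \min\Lambda^*$, $\lambda > \max\Lambda^*$, or $\lambda \in \Lambda^*$, which is precisely the branching information needed to halve a search interval~$[\ell,r]$ that is maintained to contain $\min\Lambda^*$: on the answer ``$\lambda < \min\Lambda^*$'' we set $\ell \colonequals \lambda$, on ``$\lambda > \max\Lambda^*$'' we set $r \colonequals \lambda$, and on ``$\lambda \in \Lambda^*$'' the search may stop. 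Since each invocation of the callback costs $\mathcal{O}(\MCF(m,n,C,U))$, it remains to bound the number of iterations by $\mathcal{O}(\log M)$ and to show how an optimal flow~$x^*$ is recovered from the final interval.

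First I would fix the search range and its resolution. The relevant values~$\lambda$ are reciprocals of the slopes of the efficient edges, and by integrality of $u_e$, $c_e$, and $b_e$ every such slope is a ratio of two integers whose absolute values are bounded by $\o{c}$ and by $\o{b} \colonequals \sum_{e \in E} u_e b_e$, respectively. This yields an upper bound $L \in \mathrm{poly}(M)$ with $\Lambda^* \subseteq [0,L]$, and, together with Lemma~\ref{lem:CMCFP:Pareto_Slopes}, a lower bound of the form $1/\mathrm{poly}(M)$ on the distance between any two consecutive breakpoints (the endpoints of the intervals on which a single vertex of the pareto frontier is optimal): since consecutive slopes differ by at least $\frac{1}{\o{c}^2}$ and are of bounded magnitude, the induced gap in the reciprocal parameter~$\lambda$ is still at least $1/\mathrm{poly}(M)$. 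Because $\log L$ and $\log \mathrm{poly}(M)$ are both in $\mathcal{O}(\log M)$, after $\mathcal{O}(\log M)$ halvings the interval~$[\ell,r]$ is strictly narrower than this minimum gap and hence contains at most one breakpoint.

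Finally I would extract $x^*$. If the search ever reports $\lambda \in \Lambda^*$ for a $\lambda$ lying in the interior of $\Lambda^*$ --- which is the case exactly when the optimum corresponds to a corner of the pareto frontier --- then $x^*$ is the unique minimum cost flow with respect to the edge-costs~$c_e + \lambda b_e$ and is obtained by a single flow computation. The delicate case, which I expect to be the main obstacle, is when the optimum lies \emph{amid} an efficient edge, so that $\Lambda^*$ is a single point~$\lambda^*_0$ that the binary search need never meet exactly. Here I would appeal to the resolution bound: once $[\ell,r] \ni \lambda^*_0$ is narrower than the breakpoint gap, $\lambda^*_0$ is the only breakpoint in $[\ell,r]$, so the minimum cost flows~$x^\ell$ and $x^r$ with respect to $c_e + \ell b_e$ and $c_e + r b_e$ are the two vertices of the pareto frontier adjacent to that edge, and they satisfy $b(x^r) \leq B \leq b(x^\ell)$. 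The convex combination $x^* \colonequals \theta x^\ell + (1-\theta) x^r$ with $\theta$ chosen so that $\theta\, b(x^\ell) + (1-\theta)\, b(x^r) = B$ is then a feasible flow on this efficient edge whose budget is exhausted exactly, hence an optimum of \CMCFPC. Verifying this reconstruction and making the $1/\mathrm{poly}(M)$ separation precise are the steps that require the most care; the running-time bound $\mathcal{O}(\log M \cdot \MCF(m,n,C,U))$ then follows since all of the $\mathcal{O}(\log M)$ iterations and the constantly many final flow computations cost $\mathcal{O}(\MCF(m,n,C,U))$ each.
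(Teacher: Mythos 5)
Your proposal follows essentially the same route as the paper: a binary search over the parameter $\lambda$ driven by the decision procedure of Lemma~\ref{lem:CMCFP:Pareto_Callback}, with the $\mathcal{O}(\log M)$ iteration bound coming from the slope-separation of Lemma~\ref{lem:CMCFP:Pareto_Slopes} (the paper searches a discrete grid of spacing $\frac{1}{2\o{c}^2}$ rather than halving a continuous interval, which is only a cosmetic difference), and the optimum recovered either directly or as the budget-exhausting convex combination of the two adjacent corner solutions. Your explicit remark that the separation of the slopes must be transferred to the reciprocal parameter $\lambda$ is a point the paper glosses over, but the argument is the same.
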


\begin{proof}
	Consider the set $K \colonequals \left\{ k \cdot \frac{1}{2\o{c}^2} : k \in \{0, \ldots, \o{b} \cdot 2\o{c}^2 \} \right\}$, where $\o{b} \colonequals \sum_{e \in E} u_e \cdot b_e \in \mathcal{O}(m \cdot M^2)$ and $\o{c} \colonequals \sum_{e \in E} | u_e \cdot c_e | \in \mathcal{O}(m \cdot M^2)$ are upper bounds on the total usage fees and total absolute value of the costs of any feasible flow, respectively. Note that each extreme point of the pareto frontier can be obtained by a minimum cost flow computation with edge-costs~$c_e + \lambda \cdot b_e$ for some $\lambda \in K$ since the slopes of any two efficient edges differ by an absolute amount of at least $\frac{1}{\o{c}^2}$ according to Lemma~\ref{lem:CMCFP:Pareto_Slopes}. Hence, by incorporating the procedure that is described in Lemma~\ref{lem:CMCFP:Pareto_Callback} into a binary search on the set~$K$, we either find some value~$\lambda \in \Lambda^*$ (in which case we have also found an optimum solution to \CMCFPC) or two ``adjacent'' values~$\lambda^{(1)} \colonequals k \cdot \frac{1}{2\o{c}^2}$ and $\lambda^{(2)} \colonequals (k+1) \cdot \frac{1}{2\o{c}^2}$ for some $k \in \{0, \ldots, \o{b} \cdot 2\o{c}^2 -1 \}$ with $\lambda^{(1)} < \min \Lambda^*$ and $\lambda^{(2)} > \max \Lambda^*$. These values, however, yield solutions $x^{(1)}$ and $x^{(2)}$ that correspond to the corner points of the same efficient edge, which crosses the line $b = B$ in the objective space. Thus, by computing a suitable convex combination of the two solutions $x^{(1)}$ and $x^{(2)}$, we obtain an optimum solution to \CMCFPC and are done.

	The running-time of the procedure is dominated by the binary search on the set~$K$ and the resulting $\mathcal{O}(\log|K|)$ calls to the procedure that is described in Lemma~\ref{lem:CMCFP:Pareto_Callback}. Hence, the overall running-time is given by
	\begin{align*}
		&\ \mathcal{O}(\log|K| \cdot \MCF(m,n,C,U)) \\
        =&\ \mathcal{O}(\log (\o{b} \cdot \o{c}^2) \cdot \MCF(m,n,C,U)) \\
		=&\ \mathcal{O}(\log (m^3 \cdot M^6) \cdot \MCF(m,n,C,U)) \\
		=&\ \mathcal{O}(\log M \cdot \MCF(m,n,C,U)),
	\end{align*}
	which shows the claim. \qed
\end{proof}

\section{Approximation Algorithms}

\subsection{General Graphs}

In \citep{PapadimitrouApproximatePareto}, the authors show that an \emph{$\varepsilon$-approximate pareto frontier} (i.e., a set of points~$P_\varepsilon$ such that, for each point~$y$ on the pareto frontier~$P$, there is a point~$y' \in P_\varepsilon$ such that $y$ is within a factor of $(1 + \varepsilon)$ from $y'$ in each component) of a linear convex optimization problem with $k$~objective functions can be determined by solving $\mathcal{O}((8L k^2/\varepsilon)^k)$~instances of the problem with only one objective function, where $L$ denotes the encoding-length of the largest possible objective value. Applied to \CMCFPC, we are, thus, able to determine an $\varepsilon$-approximate pareto frontier in $\mathcal{O}\left(\left(\frac{\log M}{\varepsilon} \right)^2 \cdot \MCF(m,n,C,U) \right)$~time, which in turn implies a bicriteria FPTAS for \CMCFPC. The following lemma shows that this also yields a traditional FPTAS for \CMCFPC:

\begin{lemma}
    Any bicriteria FPTAS for \CMCFPC also induces a single-criterion FPTAS for \CMCFPC.
\end{lemma}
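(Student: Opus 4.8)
The plan is to obtain the single-criterion guarantee by running the assumed bicriteria FPTAS with a slightly sharpened accuracy parameter and then repairing its output, which may exceed the budget, by uniformly scaling the flow down. Given $\varepsilon \in (0,1)$, I would set $\delta \colonequals \varepsilon/2 \in (0,1)$ and invoke the bicriteria FPTAS with parameter $\delta$ to obtain a flow~$x$ satisfying $c(x) \leq (1-\delta) \cdot c(x^*)$ and $b(x) \leq (1+\delta) \cdot b(x^*)$, where $x^*$ is an optimum solution of the given \CMCFPC instance. Since $x^*$ is feasible we have $b(x^*) \leq B$, and hence $b(x) \leq (1+\delta) \cdot B$. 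The obstacle is that $x$ need not itself respect the budget and is therefore, in general, infeasible for the single-criterion problem.

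The key observation that removes this obstacle is that flows in \CMCFPC may be scaled freely: the flow-conservation constraints are homogeneous and no flow value is prescribed, so for any $\lambda \in [0,1]$ the scaled flow~$\lambda x$ still satisfies $0 \leq \lambda x_e \leq u_e$ and flow conservation, with $c(\lambda x) = \lambda \cdot c(x)$ and $b(\lambda x) = \lambda \cdot b(x)$. I would therefore return $\tilde{x} \colonequals \frac{1}{1+\delta} \cdot x$. Then $b(\tilde{x}) = \frac{1}{1+\delta} \cdot b(x) \leq B$, so $\tilde{x}$ is feasible for the single-criterion problem.

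It remains to establish the cost bound, where care is needed because the relevant cost values are non-positive, so inequalities flip when one multiplies by them. Multiplying $c(x) \leq (1-\delta) \cdot c(x^*)$ by the positive factor $\frac{1}{1+\delta}$ and using $c(\tilde{x}) = \frac{1}{1+\delta} \cdot c(x)$ gives $c(\tilde{x}) \leq \frac{1-\delta}{1+\delta} \cdot c(x^*)$. A one-line computation shows $\frac{1-\delta}{1+\delta} \geq 1-\varepsilon$ for $\delta = \varepsilon/2$ (it reduces to $\varepsilon^2 \geq 0$); multiplying this by the non-positive number~$c(x^*)$ reverses it to $\frac{1-\delta}{1+\delta} \cdot c(x^*) \leq (1-\varepsilon) \cdot c(x^*)$, hence $c(\tilde{x}) \leq (1-\varepsilon) \cdot c(x^*)$ as required. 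Because the bicriteria FPTAS runs in time polynomial in~$|I|$ and $1/\delta = 2/\varepsilon$, and the scaling step costs only $\mathcal{O}(m)$, the entire procedure is an FPTAS.

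The hard part is conceptual rather than computational: recognizing that feasibility can be restored by scaling (which hinges on the fact that \CMCFPC fixes no flow value) and correctly propagating the sign reversals throughout the cost estimate, since $c(x^*) \leq 0$. Once these two points are in place, the choice $\delta = \varepsilon/2$ and the elementary bound on $\frac{1-\delta}{1+\delta}$ complete the argument.
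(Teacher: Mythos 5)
Your proposal is correct and follows essentially the same route as the paper's own proof: scale the bicriteria output by $\frac{1}{1+\delta}$ to restore budget feasibility, then use $\frac{1-\delta}{1+\delta} \geq 1-\varepsilon$ for $\delta = \varepsilon/2$ together with $c(x^*) \leq 0$ to obtain the single-criterion guarantee. The only difference is presentational (you fix the target $\varepsilon$ up front and sharpen the parameter to $\delta = \varepsilon/2$, whereas the paper runs with $\varepsilon$ and achieves $\varepsilon' = 2\varepsilon$), which changes nothing of substance.
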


\begin{proof}
    For each instance of \CMCFPC with optimum solution~$x^*$, the given bicriteria FPTAS computes a solution $x$ with $c(x) \leq (1 - \varepsilon) \cdot c(x^*)$ and $b(x) \leq (1 + \varepsilon) \cdot b(x^*)$ in time that is polynomial in the instance size and $\frac{1}{\varepsilon}$. Since both the costs~$c$ and usage fees~$b$ are linear functions, it suffices to scale down the given solution as follows: Let $x' \colonequals \frac{x}{1 + \varepsilon}$. Clearly, $x'$ is feasible since it still fulfills every flow conservation and capacity constraint and since $b(x') = \frac{1}{1 + \varepsilon} \cdot b(x) \leq b(x^*) \leq B$. Moreover, for $\varepsilon' \colonequals 2\varepsilon$, it holds that
    \begin{align*}
        c(x') &= \frac{1}{1 + \varepsilon} \cdot c(x) \leq \frac{1 - \varepsilon}{1 + \varepsilon} \cdot c(x^*) \\
        &= \frac{1 - \frac{\varepsilon'}{2}}{1 + \frac{\varepsilon'}{2}} \cdot c(x^*) = \frac{1 - \varepsilon' + \frac{(\varepsilon')^2}{4}}{1 - \frac{(\varepsilon')^2}{4}} \cdot c(x^*) \\
        &\leq (1 - \varepsilon') \cdot c(x^*),
    \end{align*}
    which shows the claim. \qed
\end{proof}

\begin{corollary}
    There is an FPTAS for \CMCFPC that runs in $\mathcal{O}\left(\left(\frac{\log M}{\varepsilon} \right)^2 \cdot \MCF(m,n,C,U) \right)$~time.\qed
\end{corollary}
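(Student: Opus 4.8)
The plan is to assemble the corollary from the two facts established immediately above it: the Papadimitriou--Yannakakis construction of an $\varepsilon$-approximate pareto frontier within the claimed running-time (which yields a bicriteria FPTAS), together with the preceding lemma that turns any bicriteria FPTAS into a single-criterion one. So the whole argument is really a matter of reading off the running-time and checking that the two guarantees chain together correctly.

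First I would invoke the construction with $k = 2$ objectives, namely $c(x)$ and $b(x)$. Each single-objective subproblem is a traditional minimum cost flow computation with edge-costs of the form $c_e + \lambda \cdot b_e$ and hence costs $\mathcal{O}(\MCF(m,n,C,U))$. Since the largest objective value is bounded in absolute terms by $\max\{\o{b},\o{c}\} \in \mathcal{O}(m \cdot M^2)$, its encoding length satisfies $L \in \mathcal{O}(\log M)$, so the number of subproblems is $\mathcal{O}((8 L k^2 / \varepsilon)^k) = \mathcal{O}((\log M / \varepsilon)^2)$. Multiplying the two factors gives the stated running-time $\mathcal{O}((\log M / \varepsilon)^2 \cdot \MCF(m,n,C,U))$ for computing the approximate frontier $P_\varepsilon$.

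Next I would extract a bicriteria-approximate flow from $P_\varepsilon$. The optimum point $(c(x^*),b(x^*))^T$ lies on the pareto frontier, so there is a point $(c(x),b(x))^T \in P_\varepsilon$ within a factor $(1+\varepsilon)$ of it in each component. For the usage-fee component this directly yields $b(x) \leq (1+\varepsilon) \cdot b(x^*) \leq (1+\varepsilon) \cdot B$. For the cost component the sign requires care: since $c(x^*) \leq 0$, the factor guarantee reads $c(x) \leq \frac{1}{1+\varepsilon} \cdot c(x^*)$, and combining the elementary inequality $\frac{1}{1+\varepsilon} \geq 1-\varepsilon$ with $c(x^*) \leq 0$ gives $c(x) \leq (1-\varepsilon) \cdot c(x^*)$. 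Hence $x$ meets both requirements in the definition of a bicriteria FPTAS.

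Finally I would apply the preceding lemma to convert this bicriteria FPTAS into a single-criterion FPTAS; this is precisely the rescaling $x' = x/(1+\varepsilon)$, which restores $b(x') \leq B$ at the expense of replacing $\varepsilon$ by $2\varepsilon$ in the cost guarantee. Running the entire pipeline with $\varepsilon/2$ in place of $\varepsilon$ therefore produces a solution with $c(x') \leq (1-\varepsilon) \cdot c(x^*)$, while inflating the $(\log M / \varepsilon)^2$ factor only by the constant~$4$; the asymptotic running-time is thus unchanged, which establishes the corollary. The one genuinely non-mechanical step is the sign bookkeeping in the cost component (the inequality $\frac{1}{1+\varepsilon} \geq 1-\varepsilon$ reconciling the ``factor $(1+\varepsilon)$'' notion with the paper's $(1-\varepsilon)$ guarantee); everything else is a direct substitution into the Papadimitriou--Yannakakis bound and a constant-factor reparametrization of~$\varepsilon$.
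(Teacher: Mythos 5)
Your proposal is correct and follows essentially the same route the paper intends: the corollary is stated as an immediate consequence of applying the Papadimitriou--Yannakakis construction with $k=2$ and $L \in \mathcal{O}(\log M)$ to obtain a bicriteria FPTAS, and then invoking the preceding lemma (with the $\varepsilon \mapsto \varepsilon/2$ reparametrization) to convert it into a single-criterion FPTAS. Your additional sign bookkeeping for the non-positive cost component is a detail the paper glosses over, but it is consistent with the paper's argument rather than a different approach.
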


We now show how we can obtain an FPTAS with \emph{strongly} polynomial running-time using a different approach based on a combination of \citeauthor{GargKoenemann}'s packing-LP framework and Megiddo's parametric search technique. We will therefore need the following auxiliary lemma:

\begin{lemma}\label{lem:MultigraphToSimpleGraph}
    Let $\lambda$ be a parameter with a callback that fulfills $C(m,n) \in \Omega(\frac{m}{\log m})$. Any multigraph~$G$ with linear parametric edge-lengths~$l_e(\lambda)$ on each $e \in E$ can be turned into a simple graph~$G'$ that only contains the shortest edge among all parallel edges between two nodes in $\mathcal{O}(\log m \cdot \log\log m \cdot C(m,n))$~time.
\end{lemma}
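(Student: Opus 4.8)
The plan is to select, for each ordered pair of nodes, the edge of minimum length $l_e(\lambda)$ among all parallel edges connecting them, where the relevant value of the parameter is the (unknown) optimum $\lambda^*$ whose location is decided by the given callback. Since all edge lengths are linear in $\lambda$, every comparison of two parallel edge lengths $l_e(\lambda)$ and $l_{e'}(\lambda)$ reduces to determining the sign of a single linear function at $\lambda^*$: either the two lines are parallel and the comparison is independent of $\lambda$, or they intersect in a unique critical value $\lambda'$ and the outcome is determined by whether $\lambda^* < \lambda'$, $\lambda^* = \lambda'$, or $\lambda^* > \lambda'$. The latter can be decided by a single invocation of the callback. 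Computing $G'$ therefore amounts to finding, for each group of parallel edges, the pointwise minimizer at $\lambda^*$, and I would do this by running a parallel minimum-selection algorithm and resolving its comparisons through Megiddo's parametric search.

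First I would group the $m$ edges by their ordered endpoint pairs; a group of size $k_i$ requires finding the minimum of $k_i$ linear functions evaluated at $\lambda^*$, and $\sum_i k_i = m$. The crucial ingredient is to find these minima not by a naive depth-$\mathcal{O}(\log m)$ tournament, but by Valiant's parallel minimum-selection algorithm, which computes the minimum of $k_i$ elements in only $R_i \in \mathcal{O}(\log\log k_i)$ rounds of comparisons using $\mathcal{O}(k_i)$ comparisons per round. Running one instance per group simultaneously yields a single parallel comparison algorithm with $R \in \mathcal{O}(\log\log m)$ rounds and at most $P \in \mathcal{O}(m)$ comparisons per round, since $\sum_i k_i = m$ and $\max_i k_i \leq m$.

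Next I would apply Megiddo's parallel parametric-search machinery to this parallel algorithm. In each of the $R$ rounds I would first compute the (at most $P$) critical values $\lambda'$ arising from the round's comparisons in $\mathcal{O}(P)$ time, resolving immediately those comparisons that are already $\lambda$-independent. I would then locate $\lambda^*$ among these critical values by repeatedly extracting their median in linear time and invoking the callback once to decide on which side of the median $\lambda^*$ lies, discarding the resolved half each time; after $\mathcal{O}(\log P)$ callback calls every comparison of the round is resolved and the simulation can proceed. Summing over all rounds, the total cost is $\mathcal{O}\!\left(R\cdot P + R\cdot\log P\cdot C(m,n)\right) = \mathcal{O}\!\left(m\log\log m + \log m\cdot\log\log m\cdot C(m,n)\right)$.

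It remains to absorb the non-callback term $\mathcal{O}(m\log\log m)$, and this is exactly where the hypothesis $C(m,n)\in\Omega(m/\log m)$ enters: it gives $m \in \mathcal{O}(\log m\cdot C(m,n))$, so that $m\log\log m \in \mathcal{O}(\log m\cdot\log\log m\cdot C(m,n))$ and the total running-time collapses to the claimed $\mathcal{O}(\log m\cdot\log\log m\cdot C(m,n))$. The main obstacle, and the only genuinely non-routine step, is the first one: recognizing that the $\log\log m$ factor cannot be supplied by the parametric-search overhead alone but must come from the sublogarithmic parallel \emph{depth} of Valiant's selection algorithm, while keeping the per-round comparison count linear so that Megiddo's batched, median-based comparison resolution contributes only the $\log m$ factor. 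Everything else is standard parametric-search bookkeeping, together with the observation that, for shortest-path purposes, discarding all but the shortest parallel edge at $\lambda^*$ is lossless.
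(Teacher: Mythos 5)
Your proposal follows essentially the same route as the paper's proof: group parallel edges, run Valiant's $\mathcal{O}(\log\log k)$-round, linear-work parallel minimum-selection on all groups simultaneously, and resolve each round's $\mathcal{O}(m)$ parametric comparisons with $\mathcal{O}(\log m)$ callback calls via median-based binary search, invoking $C(m,n)\in\Omega(m/\log m)$ to absorb the linear per-round overhead. The argument is correct; if anything, you are more explicit than the paper about where the hypothesis on $C(m,n)$ is actually used.
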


\begin{proof}
    Let $S \colonequals \{ (v,w) \in V^2: | \delta^+(v) \cap \delta^-(w) | \geq 2 \}$ denote the set of all pairs of nodes with at least two parallel edges between them. In order to determine the simple graph~$G'$ with the desired properties, we need to evaluate the minimum of all edges in $\delta^+(v) \cap \delta^-(w)$ for each $(v,w) \in S$. As shown in \citep{ValiantParallelMinimum}, we can determine the minimum of $k$ values in $\mathcal{O}(\log\log k)$~time using $\mathcal{O}(k)$~processors. We simulate all of these computations in parallel, which results in a total number of $\mathcal{O}(\sum_{(v,w) \in S} |\delta^+(v) \cap \delta^-(w)|) = \mathcal{O}(m)$~processors. In order to reduce the number of callback calls, we simulate the $\mathcal{O}(m)$~processors sequentially in a round-robin manner until each of them either finishes its computation or holds at the comparison of two linear parametric values, yielding~$\mathcal{O}(m)$ candidate values for $\lambda$ that need to be resolved using the callback for $\lambda$. Using a binary search on the set of these candidate values in combination with a successive determination of the median, which can be employed in $\mathcal{O}(m)$ time according to \citet{MedianLinear}, we can resolve all of the comparisons simultaneously in $\mathcal{O}(\log m \cdot C(m,n) + m)$~time and continue the simulation of the processors. After $\mathcal{O}(\log\log m)$~iterations of the above procedure, each processor has finished its computation and the edge with minimum length is determined for each $(v,w) \in S$, which shows the claim.\qed
\end{proof}

\begin{theorem}\label{thm:FPTAS}
    There is an FPTAS for \CMCFPC that runs in strongly polynomial-time~$\mathcal{\widetilde{O}}\left(\frac{1}{\varepsilon^2} \cdot (m^2 \cdot n + m \cdot n^3)\right)$.\footnotemark
\end{theorem}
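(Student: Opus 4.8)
The plan is to interpret \CMCFPC as a \emph{fractional packing LP} over simple flows and to solve it approximately with \citeauthor{GargKoenemann}'s primal-dual framework, realizing the required column-generation oracle through Megiddo's parametric search. Since the zero flow is feasible and the optimum is non-positive, we equivalently \emph{maximize}~$-c(x)$. Decomposing any feasible $s$-$t$-flow into $s$-$t$-paths and cycles and keeping only those~$P$ with reward~$-c(P) > 0$, we obtain the packing LP~$\max \sum_P (-c(P)) \cdot f_P$ subject to $\sum_{P \ni e} f_P \leq u_e$ for each~$e \in E$ and $\sum_P b(P) \cdot f_P \leq B$, with variables~$f_P \geq 0$. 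This is a fractional packing LP with $m+1$ constraints and (implicitly) exponentially many columns, and since any non-negative combination of $s$-$t$-paths and cycles is again a feasible $s$-$t$-flow, an approximately optimal LP solution yields an approximately optimal flow after a final scaling step.

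First I would run the \citeauthor{GargKoenemann} iteration: maintain a length~$\phi_e \geq 0$ for every capacity constraint and a length~$\psi \geq 0$ for the budget constraint, initialized as~$\phi_e = \delta/u_e$ and $\psi = \delta/B$. In each iteration I call the oracle to obtain the column of maximum reward-to-length ratio~$\frac{-c(P)}{\sum_{e \in P}(\phi_e + \psi\, b_e)}$, route the bottleneck amount of flow along it, and multiply the lengths of the tight constraints by~$(1+\varepsilon)$, stopping once the best ratio drops below the termination threshold. The standard analysis bounds the number of iterations by~$\mathcal{O}\!\left(\frac{(m+1)\log m}{\varepsilon^2}\right) = \mathcal{\widetilde{O}}(m/\varepsilon^2)$; scaling the accumulated flow down by the overflow factor~$\log_{1+\varepsilon}(\cdot)$ restores feasibility (capacities and budget) while the primal-dual gap guarantees cost within a factor~$(1-\varepsilon)$ of the optimum.

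The core of the argument is the oracle, which is a \emph{minimum cost-to-length ratio path/cycle} problem: writing~$\ell_e \colonequals \phi_e + \psi\, b_e \geq 0$, I must maximize~$\frac{-c(P)}{\sum_{e \in P} \ell_e}$ over all $s$-$t$-paths and cycles~$P$. Setting~$d_e(\mu) \colonequals c_e + \mu\,\ell_e$, which is linear parametric in~$\mu$, the optimal ratio equals the largest~$\mu^*$ for which $G$ admits an $s$-$t$-path or a cycle of negative~$d(\mu^*)$-length, since $\sum_{e \in P} d_e(\mu) = c(P) + \mu \sum_{e \in P} \ell_e < 0$ exactly when $\mu < \frac{-c(P)}{\sum_{e \in P}\ell_e}$. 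I would therefore apply Megiddo's parametric search (cf.~\citep{MegiddoCombinatorialOptimization,MegiddoParallel}) with a shortest-path/negative-cycle detection as the base algorithm and a callback that, for a fixed~$\mu$, tests the sign of the shortest $s$-$t$-distance and the existence of a negative cycle. Because $G$ is a multigraph with linear parametric lengths~$d_e(\mu)$, I first invoke Lemma~\ref{lem:MultigraphToSimpleGraph} to collapse all parallel edges to the currently shortest one and then run the parallelized parametric computation on the resulting simple graph, so that the oracle runs in~$\mathcal{\widetilde{O}}(mn + n^3)$ time.

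The main obstacle is precisely this oracle: the lengths~$d_e(\mu)$ may be negative, so the callback and the parametric simulation must cope with negative edges and negative cycles; moreover, the parallelization together with the successive median-selection trick used in Lemma~\ref{lem:MultigraphToSimpleGraph} must be orchestrated so that the number of callback evaluations stays polylogarithmic and the oracle meets the stated bound, and I must verify that an approximately best column still suffices for the \citeauthor{GargKoenemann} guarantee. Multiplying the~$\mathcal{\widetilde{O}}(m/\varepsilon^2)$ iterations by the oracle cost~$\mathcal{\widetilde{O}}(mn + n^3)$ yields the total running-time~$\mathcal{\widetilde{O}}\!\left(\frac{1}{\varepsilon^2}\,(m^2 \cdot n + m \cdot n^3)\right)$, and since every ingredient is strongly polynomial, this is indeed a strongly polynomial-time FPTAS, as claimed.
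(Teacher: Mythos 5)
Your proposal is correct and follows essentially the same route as the paper: a path/cycle-based packing formulation solved with \citeauthor{GargKoenemann}'s framework, where the column-generation oracle is a minimum ratio cycle computation realized via Megiddo's parametric search after collapsing parallel edges with Lemma~\ref{lem:MultigraphToSimpleGraph}. The only cosmetic difference is that the paper first adds a $t$-$s$ edge to pass to a circulation formulation, so that the oracle is purely a minimum ratio \emph{cycle} problem and the separate treatment of $s$-$t$-paths that you carry along becomes unnecessary.
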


\footnotetext{To simplify running-times, it is common to use $\mathcal{\widetilde{O}}(p)$ in order to denote $\mathcal{O}(p \cdot \log^k m)$ with $k \in \mathcal{O}(1)$.}

\begin{proof}
    We consider an equivalent, circulation-based version of \CMCFPC that can be obtained by inserting an edge with infinite capacity, zero costs, and zero usage fees between $t$ and $s$. Then, according to the flow decomposition theorem for traditional flows (cf. \citep{Ahuja}), each optimum flow~$x^*$ is positive on $\mathcal{O}(m)$~simple cycles~$C$ with strictly negative costs~$c(C) \colonequals \sum_{e \in C} c_e < 0$. Let $\mathcal{C}$ denote the set of all such simple cycles with negative costs the underlying graph. For $b(C) \colonequals \sum_{e \in C} b_e$, we obtain the following cycle-based formulation of \CMCFPC:
    {\allowdisplaybreaks
    \begin{align}
        \min~		&	\sum_{C \in \mathcal{C}} c(C) \cdot x_C \nonumber\\
    	\text{s.t.~}&	\sum_{\stackrel{C \in \mathcal{C}:}{e \in C}} x_C \leq u_e && \forall e \in E, \nonumber\\
                    &	\sum_{C \in \mathcal{C}} b(C) \cdot x_C \leq B,	\nonumber\\
    				&	x_C \geq 0 && \forall C \in \mathcal{C}. \nonumber\\
    \intertext{The dual of this linear program can be stated as follows:}
        \min~		&	B \cdot \mu + \sum_{e \in E} u_e \cdot y_e, \nonumber\\
    	\text{s.t.~}&	b(C) \cdot \mu + \sum_{e \in C} y_e \geq -c(C) && \forall C \in \mathcal{C}, \label{eqn:DualConstraint}\\
    				&	y_e \geq 0 && \forall e \in E, \nonumber\\
                    &   \mu \geq 0. \nonumber
    \end{align}%
    }%
    Although the number of variables (constraints) is exponential in the primal (dual), we are able to derive a strongly polynomial-time FPTAS for the problem using the packing-LP framework introduced by \citet{GargKoenemann}. In short, \citeauthor{GargKoenemann} show that, as long as it is possible to determine the \emph{most-violated constraint} of the dual for some infeasible dual solution~$(y,\mu)$ with $y > 0$ and $\mu > 0$ in polynomial time $\mathcal{O}(A(m,n))$, then there is an FPTAS with a running-time in $\mathcal{O}(\frac{1}{\varepsilon^2} \cdot m \log m \cdot A(m,n))$.

    Note that, since $c(C) < 0$ for each $C \in \mathcal{C}$, we can rewrite equation~\eqref{eqn:DualConstraint} as
    \[ \frac{b(C) \cdot \mu + \sum_{e \in C} y_e}{-c(C)} \geq 1, \]
    or, equivalently,
    \[ \frac{\sum_{e \in C} (b_e \cdot \mu + y_e)}{\sum_{e \in C} -c_e} \geq 1. \]
    Hence, we are done if we can determine the \emph{minimum ratio cycle}~$C \in \mathcal{C}$ with edge-costs~$b_e \cdot \mu + y_e$ and edge-times~$-c_e$ in polynomial time. \citet{MegiddoParallel} derived an algorithm that determines the minimum ratio cycle in a \emph{simple} graph in $\mathcal{O}(n^3 \log n + m \cdot n \log^2n \log\log n)$~time by computing all-pair shortest paths in combination with \citeauthor{KarpMinimumMeanCycle}'s minimum mean cycle algorithm \citep{KarpMinimumMeanCycle} as a callback function in his parametric search. In order to comply with our setting of multigraphs, we first need to apply Lemma~\ref{lem:MultigraphToSimpleGraph} with the minimum mean cycle algorithm as a callback, running in $C(m,n) \colonequals \mathcal{O}(m \cdot n)$~time, to the underlying graph, which yields a running-time of $\mathcal{O}(m \cdot n \cdot \log m \cdot \log\log m)$. In total, we get that
    \begin{align*}
        A(m,n) =~   & \mathcal{O}(m \cdot n \cdot \log m \log\log m \\
                    & + n^3 \log n + m \cdot n \cdot \log^2n \log\log n).
    \end{align*}
    Thus, incorporated in \citeauthor{GargKoenemann}'s framework, we obtain an overall running-time of
    \begin{align*}
         &\ \mathcal{O}\left(\frac{1}{\varepsilon^2} \cdot m \log m \cdot A(m,n) \right) \\
         =&\ \mathcal{\widetilde{O}}\left(\frac{1}{\varepsilon^2} \cdot (m^2 \cdot n + m \cdot n^3) \right).
    \end{align*}\qed%
\end{proof}

\subsection{Acyclic Graphs}

We now show how we can improve the running-time of the FPTAS described in Theorem~\ref{thm:FPTAS} for the case of an acyclic graph~$G$. Since there are no cycles in $G$, we only need to repeatedly determine minimum ratio $s$-$t$-paths rather than minimum ratio cycles. This, however, can be done more efficiently, as shown in the following lemma:

\begin{lemma}\label{lem:MostViolatedConstraintAcyclic}
	Let~$d^{(1)} \colon E \rightarrow \mathbb{R}$ and $d^{(2)} \colon E \rightarrow \mathbb{R}$ be two cost functions with $d^{(1)}_e \colonequals d^{(1)}(e)$ and $d^{(2)}_e \colonequals d^{(2)}(e)$ for each $e \in E$ and assume that $\sum_{e \in P} d^{(2)}_e > 0$ for each $s$-$t$-path $P$. An $s$-$t$-path~$P^*$ that minimizes the ratio~$\frac{\sum_{e \in P} d^{(1)}_e}{\sum_{e \in P} d^{(2)}_e}$ among all $s$-$t$-paths~$P$ can be found in $\mathcal{O}(m \cdot (\log m \log\log m + n \cdot \log n))$~time on acyclic graphs.
\end{lemma}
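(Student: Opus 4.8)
The plan is to reduce the minimum-ratio $s$-$t$-path problem to a single parametric-search computation driven by Megiddo's technique, exactly mirroring the minimum-ratio-cycle approach used in the proof of Theorem~\ref{thm:FPTAS}, but exploiting the acyclicity of $G$ to replace the expensive all-pairs-shortest-path subroutine by a single-source shortest-path computation via topological ordering. First I would parametrize the problem: for a candidate value~$\lambda$, define the parametric edge-lengths $l_e(\lambda) \colonequals d^{(1)}_e - \lambda \cdot d^{(2)}_e$ for each $e \in E$. Since $\sum_{e \in P} d^{(2)}_e > 0$ for every $s$-$t$-path~$P$, the length of the shortest $s$-$t$-path under $l_e(\lambda)$ is a strictly decreasing, piecewise-linear function of~$\lambda$, and its unique root~$\lambda^*$ equals the minimum ratio we seek. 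The callback~$C$ that compares a candidate~$\lambda$ against~$\lambda^*$ amounts to computing a shortest $s$-$t$-path with respect to the concrete lengths~$l_e(\lambda)$ and testing the sign of its length: a nonpositive shortest-path length certifies $\lambda \geq \lambda^*$, a positive one certifies $\lambda < \lambda^*$.

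The next step is to bound the cost of this callback. Because $G$ is acyclic, a single shortest $s$-$t$-path under fixed edge-lengths can be computed in $\mathcal{O}(m + n)$~time by processing the nodes in topological order, so the callback runs in $C(m,n) \in \mathcal{O}(m)$~time. It remains to run the shortest-path dynamic program itself parametrically in~$\lambda$ and count how often comparisons of linear parametric values must be resolved by the callback. The topological-order relaxation performs $\mathcal{O}(m)$ comparisons overall, each between two linear functions of~$\lambda$; processing these naively would give $\mathcal{O}(m)$ callback calls and an $\mathcal{O}(m^2)$ bound, which is too slow. To obtain the stated running-time I would instead apply the batched median-of-candidates resolution scheme already used in Lemma~\ref{lem:MultigraphToSimpleGraph}: simulate the parametric shortest-path computation, collect the candidate breakpoints level by level in the topological DAG, and resolve each batch of candidates through repeated median selection (computable in $\mathcal{O}(m)$~time by \citet{MedianLinear}) combined with binary search, so that each batch costs $\mathcal{O}(\log m \cdot C(m,n) + m)$. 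The node-relaxation step of the parametric shortest-path DP at node~$v$ compares $\mathcal{O}(|\delta^-(v)|)$ incoming candidate path-lengths, and invoking \citet{ValiantParallelMinimum}'s parallel minimum together with this batched resolution contributes the $\mathcal{O}(m \cdot \log m \log\log m)$~term, while the sequential dependency along the longest topological chain contributes the $\mathcal{O}(m \cdot n \log n)$~term.

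The main obstacle, and the part requiring the most care, is organizing the parametric simulation so that the dependency structure of the topological-order DP is respected while still batching comparisons tightly enough to avoid an extra factor of~$m$ in the number of callback invocations. Unlike the setting of Lemma~\ref{lem:MultigraphToSimpleGraph}, where the $\mathcal{O}(m)$ minimum computations are mutually independent and can all be advanced in lockstep, the shortest-path recurrence is inherently sequential: the tentative label of a node can only be finalized once all its predecessors are finalized. I would therefore argue that the comparisons naturally stratify by topological level, that all comparisons within a level are independent and hence resolvable in a single batched round, and that the number of rounds is bounded by the relevant structural parameters so that the total comparison-resolution cost telescopes to the claimed bound. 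Once $\lambda^*$ is located as the breakpoint at which the parametric shortest-path length changes sign, the corresponding shortest path~$P^*$ under~$l_e(\lambda^*)$ is the desired minimum-ratio path, completing the proof. \qed
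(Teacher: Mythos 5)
Your overall strategy --- Megiddo-style parametric search over $\lambda$ with edge lengths $d^{(\lambda)}_e = d^{(1)}_e - \lambda \cdot d^{(2)}_e$, the sign of the parametric shortest $s$-$t$-path distance as callback, and batched median-based resolution of the comparisons arising in the topological-order dynamic program --- is exactly the paper's. However, the step you yourself single out as ``requiring the most care'' is where your argument has genuine gaps. First, you never reduce the multigraph to a simple graph; the paper invokes Lemma~\ref{lem:MultigraphToSimpleGraph} as a preprocessing step, and this is precisely what makes its batching work: when node~$v$ is scanned, each outgoing edge $e=(v,w)$ triggers a \emph{single} comparison $\min\{dist(w),\, dist(v)+l_e\}$, and because the graph is simple these $|\delta^+(v)|$ comparisons target pairwise distinct successors and are therefore independent, so one batched binary search with $\mathcal{O}(\log|\delta^+(v)|)$ callback calls resolves them all; summing $\log|\delta^+(v)|\cdot m + |\delta^+(v)|$ over $v$ gives the $\mathcal{O}(m\cdot n\log n)$ term, while the $\mathcal{O}(m\log m\log\log m)$ term is the cost of the simple-graph reduction itself, not of any minimum computation inside the dynamic program. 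Second, your claim that all comparisons within one topological level are independent is false: two nodes at the same level with edges into a common successor~$w$ both update $dist(w)$, and a three-way minimum is not a single independent comparison.

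Third, the accounting you propose instead --- Valiant's parallel minimum over the $|\delta^-(w)|$ incoming candidates at each node~$w$ --- does not deliver the stated bound: it requires $\mathcal{O}(\log\log|\delta^-(w)|)$ rounds per node, each resolved by a binary search with $\mathcal{O}(\log|\delta^-(w)|)$ callback calls of cost $\mathcal{O}(m)$ apiece, which sums to $\mathcal{O}(\log\log m \cdot m\cdot n\log n)$ and thus exceeds the claimed $\mathcal{O}(m\cdot n\log n)$ by a $\log\log m$ factor. The parametrization, the sign-test callback, and its $\mathcal{O}(m)$ cost per invocation are all fine; what is missing is the specific organization of the simulation --- simple-graph preprocessing via Lemma~\ref{lem:MultigraphToSimpleGraph}, followed by batching over the \emph{outgoing} edges of the node currently being scanned --- that makes the total number of callback invocations telescope to $\mathcal{O}(n\log n)$.
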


\begin{proof}
	Let $\mathcal{P}$ denote the set of all $s$-$t$-paths in the underlying acyclic graph~$G$. Similar to \citet{MegiddoCombinatorialOptimization} and \citet{MinimalRatioSpanningTrees}, we can restrict our considerations to the problem $\min_{P \in \mathcal{P}} \sum_{e \in P} d^{(\lambda)}_e$ with $d^{(\lambda)}_e \colonequals d^{(1)}_e - \lambda \cdot d^{(2)}_e$ for each $e \in E$ and some parameter~$\lambda$: Using similar arguments as in \citep{MinimalRatioSpanningTrees}, it is easy to see that, for some candidate value of $\lambda$, it holds that $\min_{P \in \mathcal{P}} \sum_{e \in P} d^{(\lambda)}_e$ is negative (positive) if and only if the value of $\lambda$ is smaller (larger) than the value~$\lambda^*$ that leads to an optimum solution~$P^*$ to $\min_{P \in \mathcal{P}} \frac{\sum_{e \in P} d^{(1)}_e}{\sum_{e \in P} d^{(2)}_e}$. Hence, by simulating the shortest path algorithm for acyclic graphs with edge lengths~$d^{(\lambda)}$ for a symbolic value of $\lambda$ using Megiddo's parametric search technique, it is possible to determine the optimum solution~$P^*$ in $\mathcal{O}(m^2)$~time.

    We can improve this running-time by first applying Lemma~\ref{lem:MultigraphToSimpleGraph} to the underlying multigraph in order to obtain a simple graph with the same shortest paths as in $G$ in $\mathcal{O}(m \cdot \log m \log\log m)$~time. In this simple graph, we simulate the shortest path algorithm for acyclic graphs, which initially sets the distance label of each node to infinity. It then investigates the nodes in the order of a topological sorting and, for each outgoing edge~$e=(v,w)$ of some node~$v \in V \setminus \{t\}$ in this sorting, updates the distance label~$dist(w)$ of node~$w$ to $\min\{dist(w), dist(v) + l_e \}$ where $l_e$ denotes the length of edge~$e$, which results in a comparison of two linear parametric values. Note that the edges in $\delta^+(v)$ head to different nodes since the underlying graph is simple, so all of these comparisons are independent from each other. Thus, by evaluating a binary search over the set of candidate values that result from each of these comparisons as described in \citep{MegiddoParallel} and as used above, we only need $\mathcal{O}(\log| \delta^+(v) |)$ shortest path computations at an overhead of $\mathcal{O}(|\delta^+(v)|)$ for the median computations. This results in an running-time for the parametric shortest path computation of
    {\allowdisplaybreaks
	\begin{align*}
		&\ \mathcal{O}\left( \sum_{v \in V \setminus \{t\}} (\log | \delta^+(v) | \cdot m + |\delta^+(v)| ) \right) \\
		=&\ \mathcal{O}\left( m \cdot n \cdot \log n \right),
	\end{align*}%
	}%
    which in, combination with the overhead of $\mathcal{O}(m \cdot \log m \cdot \log\log m)$ for the transformation into a simple graph, shows the claim.
    \qed
\end{proof}

By incorporating the results of Lemma~\ref{lem:MostViolatedConstraintAcyclic} in the packing-LP framework of \citeauthor{GargKoenemann} as described in the proof of Theorem~\ref{thm:FPTAS}, we immediately get the following corollary:

\begin{corollary}
    There is an FPTAS for \CMCFPC that runs in $\mathcal{O}\left(\frac{1}{\varepsilon^2} \cdot m^2 \log m \cdot (\log m \log\log m + n \log n) \right)$ time on acyclic graphs. \qed
\end{corollary}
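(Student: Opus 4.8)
The plan is to derive the Corollary by specializing the argument in the proof of Theorem~\ref{thm:FPTAS} to acyclic graphs, the only difference being that the pricing subroutine for the most-violated dual constraint becomes a minimum ratio \emph{path} computation, which Lemma~\ref{lem:MostViolatedConstraintAcyclic} carries out faster than the minimum ratio cycle computation used in the general case.

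First I would reuse the circulation-based reformulation from the proof of Theorem~\ref{thm:FPTAS}, inserting a back-edge from $t$ to $s$ with zero cost, zero usage fee, and infinite capacity. The key structural observation is that, since $G$ is acyclic, every simple cycle of the augmented graph must use this back-edge exactly once; hence each negative-cost cycle in $\mathcal{C}$ decomposes uniquely into an $s$-$t$-path of $G$ together with the back-edge. As the back-edge has zero cost and usage fee and, being uncapacitated, contributes no dual variable $y_e$, it adds nothing to either the numerator or the denominator of the cycle ratio. Consequently, finding the most-violated dual constraint reduces from a minimum ratio cycle problem to the minimum ratio $s$-$t$-path problem with $d^{(1)}_e \colonequals b_e \cdot \mu + y_e$ and $d^{(2)}_e \colonequals -c_e$.

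Next I would check the hypotheses of Lemma~\ref{lem:MostViolatedConstraintAcyclic}. Since \citeauthor{GargKoenemann}'s framework only prices over the paths that can carry flow in an optimum solution, i.e.\ the negative-cost ones, I would restrict attention to $s$-$t$-paths $P$ with $c(P) < 0$, equivalently $d^{(2)}(P) = \sum_{e \in P}(-c_e) > 0$, which is exactly the positivity assumption the lemma requires. Lemma~\ref{lem:MostViolatedConstraintAcyclic} then delivers the minimum ratio path, and thus the most-violated dual constraint, in time $A(m,n) = \mathcal{O}(m \cdot (\log m \log\log m + n \log n))$.

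Finally, I would substitute this value of $A(m,n)$ into the running-time guarantee $\mathcal{O}(\frac{1}{\varepsilon^2} \cdot m \log m \cdot A(m,n))$ of \citeauthor{GargKoenemann}'s framework established in the proof of Theorem~\ref{thm:FPTAS}, which collapses directly to the claimed bound $\mathcal{O}(\frac{1}{\varepsilon^2} \cdot m^2 \log m \cdot (\log m \log\log m + n \log n))$. I expect the only genuinely delicate point to be arguing that restricting the pricing oracle to negative-cost paths is sound --- that discarding the non-negative-cost paths neither affects the value of the linear program nor the correctness of the iterations --- together with confirming that the back-edge indeed drops out of the ratio, so that the cycle oracle and the path oracle coincide; once these are settled, the remaining running-time calculation is purely mechanical.
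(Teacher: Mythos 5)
Your proposal is correct and follows essentially the same route as the paper, which likewise obtains the corollary by replacing the minimum ratio cycle oracle in the Garg--K\"onemann framework of Theorem~\ref{thm:FPTAS} with the minimum ratio $s$-$t$-path computation of Lemma~\ref{lem:MostViolatedConstraintAcyclic} and substituting its running time for $A(m,n)$. The additional details you supply (the back-edge dropping out of the ratio and the positivity hypothesis of the lemma being met by negative-cost paths) are correct elaborations of what the paper leaves implicit.
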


\section{Conclusion}

In this paper, we presented results on the complexity and approximability of the budget-constrained minimum cost flow problem. As the problem is known to be solvable both in weakly polynomial time by interior-point methods and in strongly-polynomial time as shown in \citep{BudgetConstrainedMinCostFlows}, we developed a new combinatorial algorithm that runs in weakly polynomial time~$\mathcal{O}(\log M \cdot \MCF(m,n,C,U))$. Moreover, we presented a weakly polynomial-time FPTAS that uses the $\varepsilon$-approximate pareto frontier and a strongly polynomial-time FPTAS based on both \citeauthor{GargKoenemann}'s packing-LP framework and Megiddo's parametric search technique. Moreover, we could show that we can improve the running-time of the latter algorithm for the case of acyclic graphs.


\bibliographystyle{elsarticle-num-names}
\bibliography{/Users/holzhaus/Documents/Forschung/literature}

\begin{thebibliography}{20}
\providecommand{\natexlab}[1]{#1}
\providecommand{\url}[1]{\texttt{#1}}
\providecommand{\urlprefix}{URL }
\expandafter\ifx\csname urlstyle\endcsname\relax
  \providecommand{\doi}[1]{doi:\discretionary{}{}{}#1}\else
  \providecommand{\doi}[1]{doi:\discretionary{}{}{}\begingroup
  \urlstyle{rm}\url{#1}\endgroup}\fi
\providecommand{\bibinfo}[2]{#2}

\bibitem[{Ahuja et~al.(1993)Ahuja, Magnanti, and Orlin}]{Ahuja}
\bibinfo{author}{R.~K. Ahuja}, \bibinfo{author}{T.~L. Magnanti},
  \bibinfo{author}{J.~B. Orlin}, \bibinfo{title}{Network Flows},
  \bibinfo{publisher}{Prentice Hall}, \bibinfo{year}{1993}.

\bibitem[{Ford and Fulkerson(1958)}]{FordFulkersonDynamicFlow}
\bibinfo{author}{L.~R. Ford}, \bibinfo{author}{D.~R. Fulkerson},
  \bibinfo{title}{Constructing Maximal Dynamic Flows from Static Flows},
  \bibinfo{journal}{Operations Research} \bibinfo{volume}{6}
  (\bibinfo{year}{1958}) \bibinfo{pages}{419--433}.

\bibitem[{Chankong and Haimes(2008)}]{ChankongHaimes}
\bibinfo{author}{V.~Chankong}, \bibinfo{author}{Y.~Y. Haimes},
  \bibinfo{title}{Multiobjective Decision Making: Theory and Methodology},
  Dover Books on Engineering Series, \bibinfo{publisher}{Dover Publications,
  Incorporated}, \bibinfo{year}{2008}.

\bibitem[{Duque et~al.(2013)Duque, Coene, K., S{\"o}rensen, and
  Spieksma}]{SpieksmaAccessibility}
\bibinfo{author}{P.~A.~M. Duque}, \bibinfo{author}{S.~Coene},
  \bibinfo{author}{P.~G. K.}, \bibinfo{author}{S{\"o}rensen},
  \bibinfo{author}{F.~Spieksma}, \bibinfo{title}{The accessibility arc
  upgrading problem}, \bibinfo{journal}{European Journal of Operational
  Research} \bibinfo{volume}{224}~(\bibinfo{number}{3}) (\bibinfo{year}{2013})
  \bibinfo{pages}{458--465}.

\bibitem[{Holzhauser et~al.(2016)Holzhauser, Krumke, and
  Thielen}]{BudgetConstrainedMinCostFlows}
\bibinfo{author}{M.~Holzhauser}, \bibinfo{author}{S.~O. Krumke},
  \bibinfo{author}{C.~Thielen}, \bibinfo{title}{Budget-Constrained Minimum Cost
  Flows}, \bibinfo{journal}{Journal of Combinatorial Optimization}
  \bibinfo{volume}{31}~(\bibinfo{number}{4}) (\bibinfo{year}{2016})
  \bibinfo{pages}{1720--1745}.

\bibitem[{Papadimitriou and Yannakakis(2000)}]{PapadimitrouApproximatePareto}
\bibinfo{author}{C.~H. Papadimitriou}, \bibinfo{author}{M.~Yannakakis},
  \bibinfo{title}{On the approximability of trade-offs and optimal access of
  web sources}, in: \bibinfo{booktitle}{Foundations of Computer Science, 2000.
  Proceedings. 41st Annual Symposium on}, \bibinfo{organization}{IEEE},
  \bibinfo{pages}{86--92}, \bibinfo{year}{2000}.

\bibitem[{Garg and Koenemann(2007)}]{GargKoenemann}
\bibinfo{author}{N.~Garg}, \bibinfo{author}{J.~Koenemann},
  \bibinfo{title}{Faster and simpler algorithms for multicommodity flow and
  other fractional packing problems}, \bibinfo{journal}{SIAM Journal on
  Computing} \bibinfo{volume}{37}~(\bibinfo{number}{2}) (\bibinfo{year}{2007})
  \bibinfo{pages}{630--652}.

\bibitem[{Megiddo(1979)}]{MegiddoCombinatorialOptimization}
\bibinfo{author}{N.~Megiddo}, \bibinfo{title}{Combinatorial optimization with
  rational objective functions}, \bibinfo{journal}{Mathematics of Operations
  Research} \bibinfo{volume}{4}~(\bibinfo{number}{4}) (\bibinfo{year}{1979})
  \bibinfo{pages}{414--424}.

\bibitem[{Megiddo(1983)}]{MegiddoParallel}
\bibinfo{author}{N.~Megiddo}, \bibinfo{title}{Applying parallel computation
  algorithms in the design of serial algorithms}, \bibinfo{journal}{Journal of
  the ACM (JACM)} \bibinfo{volume}{30}~(\bibinfo{number}{4})
  (\bibinfo{year}{1983}) \bibinfo{pages}{852--865}.

\bibitem[{Cohen and Megiddo(1990)}]{CohenFixedDimension}
\bibinfo{author}{E.~Cohen}, \bibinfo{author}{N.~Megiddo},
  \bibinfo{title}{Maximizing concave functions in fixed dimension},
  \bibinfo{publisher}{World Scientific}, \bibinfo{year}{1990}.

\bibitem[{Toledo(1993)}]{ToledoApproximateParametricSearching}
\bibinfo{author}{S.~Toledo}, \bibinfo{title}{Approximate parametric searching},
  \bibinfo{journal}{Information processing letters}
  \bibinfo{volume}{47}~(\bibinfo{number}{1}) (\bibinfo{year}{1993})
  \bibinfo{pages}{1--4}.

\bibitem[{Toledo(1992)}]{ToledoFixedDimension}
\bibinfo{author}{S.~Toledo}, \bibinfo{title}{Maximizing non-linear concave
  functions in fixed dimension}, in: \bibinfo{booktitle}{Foundations of
  Computer Science, 1992. Proceedings., 33rd Annual Symposium on},
  \bibinfo{organization}{IEEE}, \bibinfo{pages}{676--685},
  \bibinfo{year}{1992}.

\bibitem[{Schrijver(1998)}]{Schrijver}
\bibinfo{author}{A.~Schrijver}, \bibinfo{title}{Theory of Linear and Integer
  Programming}, \bibinfo{publisher}{John Wiley \& Sons, Chichester},
  \bibinfo{year}{1998}.

\bibitem[{Vaidya(1989)}]{VaidyaInteriorPoint}
\bibinfo{author}{P.~M. Vaidya}, \bibinfo{title}{Speeding-up linear programming
  using fast matrix multiplication}, in: \bibinfo{booktitle}{Foundations of
  Computer Science, 1989., 30th Annual Symposium on},
  \bibinfo{organization}{IEEE}, \bibinfo{pages}{332--337},
  \bibinfo{year}{1989}.

\bibitem[{Ehrgott(2005)}]{Ehrgott}
\bibinfo{author}{M.~Ehrgott}, \bibinfo{title}{Multicriteria optimization},
  \bibinfo{publisher}{Springer}, \bibinfo{edition}{2nd} edn.,
  \bibinfo{year}{2005}.

\bibitem[{Geoffrion(1967)}]{Geoffrion}
\bibinfo{author}{A.~M. Geoffrion}, \bibinfo{title}{Solving bicriterion
  mathematical programs}, \bibinfo{journal}{Operations Research}
  \bibinfo{volume}{15}~(\bibinfo{number}{1}) (\bibinfo{year}{1967})
  \bibinfo{pages}{39--54}.

\bibitem[{Valiant(1975)}]{ValiantParallelMinimum}
\bibinfo{author}{L.~G. Valiant}, \bibinfo{title}{Parallelism in comparison
  problems}, \bibinfo{journal}{SIAM Journal on Computing}
  \bibinfo{volume}{4}~(\bibinfo{number}{3}) (\bibinfo{year}{1975})
  \bibinfo{pages}{348--355}.

\bibitem[{Blum et~al.(1972)Blum, Floyd, Pratt, Rivest, and
  Tarjan}]{MedianLinear}
\bibinfo{author}{M.~Blum}, \bibinfo{author}{R.~W. Floyd},
  \bibinfo{author}{V.~Pratt}, \bibinfo{author}{R.~L. Rivest},
  \bibinfo{author}{R.~E. Tarjan}, \bibinfo{title}{Linear time bounds for median
  computations}, in: \bibinfo{booktitle}{Proceedings of the fourth annual ACM
  symposium on Theory of computing}, \bibinfo{organization}{ACM},
  \bibinfo{pages}{119--124}, \bibinfo{year}{1972}.

\bibitem[{Karp(1978)}]{KarpMinimumMeanCycle}
\bibinfo{author}{R.~M. Karp}, \bibinfo{title}{A characterization of the minimum
  cycle mean in a digraph}, \bibinfo{journal}{Discrete mathematics}
  \bibinfo{volume}{23}~(\bibinfo{number}{3}) (\bibinfo{year}{1978})
  \bibinfo{pages}{309--311}.

\bibitem[{Chandrasekaran(1977)}]{MinimalRatioSpanningTrees}
\bibinfo{author}{R.~Chandrasekaran}, \bibinfo{title}{Minimal ratio spanning
  trees}, \bibinfo{journal}{Networks} \bibinfo{volume}{7}~(\bibinfo{number}{4})
  (\bibinfo{year}{1977}) \bibinfo{pages}{335--342}.

\end{thebibliography}

\end{document}